\numberwithin{equation}{section}
\newcommand{\R}{{\mathbb R}}
\newcommand{\N}{{\mathbb N}}
\newcommand{\Z}{{\mathbb Z}}
\renewcommand{\d}{\partial}
\newcommand{\sign}{{\operatorname{sgn}\,}}
\newcommand{\al}{\alpha}
\newcommand{\be}{\beta}
\newcommand{\ga}{\gamma}
\newcommand{\la}{\lambda}
\newcommand{\ep}{\varepsilon}
\newcommand{\de}{\delta}
\newcommand{\De}{\Delta}
\newcommand{\sg}{\sigma}
\newcommand{\om}{\omega}
\newcommand{\Om}{\Omega}
\newcommand{\z}{\zeta}
\newcommand{\Jth}{\vartheta}
\newcommand{\ocal}{{\mathcal O}}
\newtheorem{theo}{{\sc \bf Theorem}}[section]
\newtheorem{prop}[theo]{{\sc \bf Proposition}}
\begin{document}

\title[Six-vertex model with half-turn symmetry]
{Domain wall six-vertex model with half-turn symmetry}

\author{Pavel Bleher}
\address{Department of Mathematical Sciences,
Indiana University-Purdue University Indianapolis,
402 N. Blackford St., Indianapolis, IN 46202, U.S.A.}
\email{bleher@math.iupui.edu}

\author{Karl Liechty}
\address{Department of Mathematical Sciences,
DePaul University,
Chicago, IL 60614, U.S.A.}
\email{kliechty@depaul.edu}

\thanks{This work was performed during the program {\it Statistical mechanics and combinatorics} at the Simons Center for Geometry and Physics in March 2016. The authors are grateful for the hospitality of the Simons Center. KL would like to thank T. Kyle Petersen for helpful discussions. 
 PB is supported in part
by the National Science Foundation (NSF) Grants DMS-1265172 and DMS-1565602. 
KL is supported by a grant from the Simons Foundation (\#357872)}

\date{\today}

\begin{abstract}
We obtain asymptotic formulas for the partition function of the six-vertex model with domain wall boundary conditions and half-turn symmetry in each of the phase regions. The proof is based on the Izergin--Korepin--Kuperberg determinantal formula for the partition function, 
its reduction to orthogonal polynomials, and on an asymptotic analysis of the orthogonal polynomials under
consideration in the framework of the Riemann--Hilbert approach. 
\end{abstract}

\keywords{six-vertex model, alternating sign matrices, orthogonal polynomials, Riemann--Hilbert problem}
\subjclass[2010]{82B20, 82B23, 35Q15}

\maketitle

\section{Introduction}

 The six-vertex model with domain wall boundary conditions (DWBC) was originally introduced by Korepin \cite{Kor82}, who derived some recurrences for the model which were subsequently solved by Izergin \cite{Ize87}, giving an exact formula for the partition function of determinantal type known as the Izergin--Korepin formula. The original Izergin--Korepin formula gives the partition function for a certain {\it inhomogeneous} six-vertex model, in which the weights in the model are site-dependent. The homogeneous version, in which weights depend on vertex-type but are independent of site, is obtained from the inhomogeneous Izergin--Korepin formula by taking a limit as all inhomogeneity parameters coincide. We remark here that for certain values of the inhomogeneity parameters, there is a different determinantal formula for the DWBC partition function which is useful for computation of correlations in some cases \cite{K-ZJ2}, see also \cite{dG-K}. In this paper we focus on the homogeneous six-vertex model with DWBC, and therefore use the Izergin--Korepin formula which has a homogeneous limit.
 
 The states of the six-vertex model with DWBC correspond bijectively with {\it alternating sign matrices}, and the Izergin--Korepin formula was used by Kuperberg \cite{Kup96} to prove the exact enumeration of alternating sign matrices which had been conjectured by Mills, Robbins, and Rumsey \cite{MRR82} and proven by other means in \cite{Zei96}. Later the Izergin--Korepin formula was used by Korepin and Zinn-Justin \cite{K-ZJ} to derive the free energy of the six-vertex model with DWBC. This analysis was continued in \cite{ZJ}, in which the similarity between the Izergin--Korepin formula and random matrix partition functions was noted, thus allowing for an expression of the DWBC partition function in terms of orthogonal polynomials. For certain weights in the model these orthogonal polynomials are classical, and these special cases were studied in a series of papers by Colomo and Pronko \cite{CP1, CP2, CP3, CP4}. Outside of these special weights, the relevant orthogonal polynomials are not classical but may be analyzed asymptotically using the Riemann--Hilbert method. This approach has been employed by the current authors and their collaborators to obtain exact asymptotic formulas for the partition function of the six-vertex model with DWBC as well as partial domain wall boundary conditions (pDWBC) \cite{BF06,BL09-1, BL09-2, BL10, BlBo12, BlBo14, BL14, BL15}. This paper continues that program, extending the asymptotic analysis of the DWBC partition function employed in previous works to a six-vertex model with DWBC which is further constrained by a rotational symmetry.
 
In the work \cite{Kup02} Kuperberg considered various symmetry classes of alternating sign matrices, and was able to give exact enumerations for several such symmetry classes. Each of the symmetry classes studied in that paper corresponds to a different boundary condition for the six-vertex model, and Kuperberg was able to give exact formulas for the partition functions of the corresponding six-vertex models. Each of these formulas is reminiscent of the Izergin--Korepin formula for the DWBC partition function, and involves either determinants or Pfaffians. In this paper we consider one of those symmetry classes, the {\it half-turn} symmetry, in which the DWBC six-vertex states are forced to have a  rotational symmetry of $180^\circ$. 


\subsection{Definition of the model}
The half-turn-invariant six-vertex model with DWBC is realized on a rectangular lattice of size $(2n) \times (2n)$ for any $n\in \N$. The states of the model are realized by placing arrows on the edges of the graph obeying the {\it ice rule}: at each vertex there are exactly two arrows pointing in and two arrows pointing out. The arrows on the left and right boundaries are fixed to point out of the lattice, and the arrows on the top and bottom boundaries are fixed to point in. We place the additional constraint that arrow configurations must be invariant with respect to rotation by $180^\circ$, see Figure \ref{example1}.

According to the ice rule, there are exactly six types of configuration at each vertex, and we label them with the numbers $1, 2, \dots,6$ as in Figure \ref{arrows}. The Gibbs measure is defined by assigning a weight $w_k$, $k=1,\dots,6$ to each vertex-type. The weight of an arrow configuration $\sg$ is then defined as
\begin{equation*}\label{int1}
w(\sg)=\prod_{x\in V_{2n}} w_{t(x; \sg)}=\prod_{i=1}^6 w_i^{N_i(\sg)}\,,
\end{equation*}
where $V_{2n}$ is the set of vertices in the lattice, $t(x; \sg)$ is the type of vertex at the vertex $x\in V_{2n}$ in the configuration $\sg$, and $N_i(\sg)$ is the number of vertices of type $i$ in the configuration $\sg$. The Gibbs measure on states is then defined as 
\begin{equation*}\label{int2}
\mu(\sg)= \frac{w(\sg)}{Z_{2n}^{\rm HT}}\,, \qquad Z^{\rm HT}_{2n}\equiv Z_{2n}^{\rm HT}(w_1, w_2, w_3, w_4, w_5, w_6) =\sum_{\sg} w(\sg),
\end{equation*}
where $Z_{2n}^{\rm HT}$ is the {\it partition function}, and the sum is over all configurations obeying both DWBC and the half-turn symmetry.

\begin{figure}
\begin{center}\scalebox{0.45}{\includegraphics{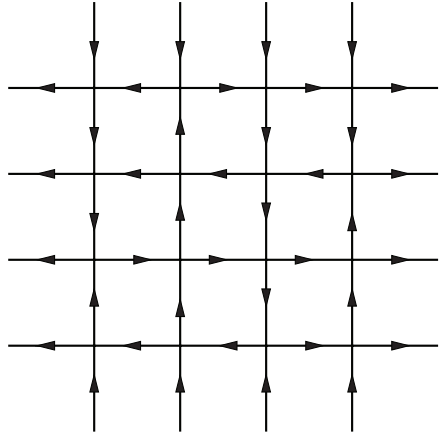}}\end{center}
\caption{An example of the arrow configuration satisfying domain wall boundary conditions and half-turn symmetry on the $4\times 4$ lattice.}
\label{example1}
\end{figure}

\begin{figure}
\begin{center}\scalebox{0.55}{\includegraphics{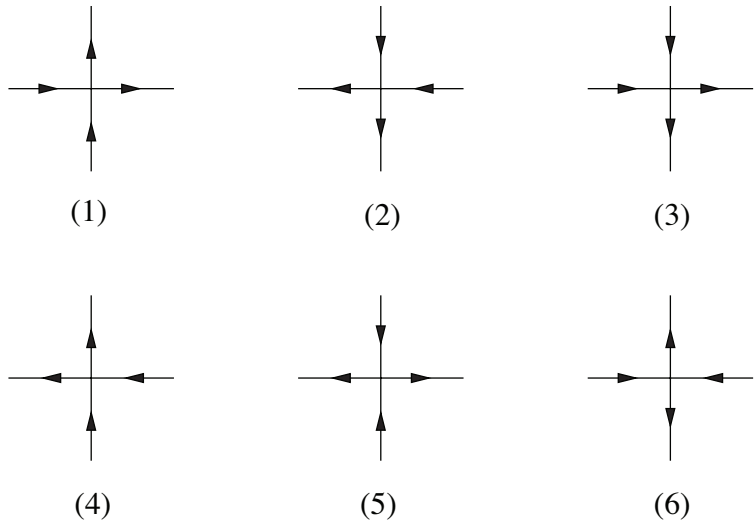}}\end{center}
\caption{The six types of vertices allowed under the ice-rule.}
\label{arrows}
\end{figure}

A priori there are six parameters in this model: the weights $w_i$. But in fact the boundary conditions impose some conservation laws which allow us to reduce the number of parameters to 3. Namely, any six-vertex configuration $\sg$ on a $(2n) \times (2n)$ lattice satisfying DWBC satisfies the following equations:
\begin{equation}\label{int3}
\begin{aligned}
N_1(\sg)+N_2(\sg)+N_3(\sg)+N_4(\sg)+N_5(\sg)+N_6(\sg)&=4n^2, \\
N_5(\sg)-N_6(\sg)&=2n, \\
N_1(\sg)&=N_2(\sg), \\
N_4(\sg)&=N_3(\sg), \\
\end{aligned}
\end{equation}
see, e.g., \cite{BL09-1, BL14}. Setting 
\begin{equation*}
a=\sqrt{w_1w_2}, \quad b=\sqrt{w_3 w_4}, \quad c=\sqrt{w_5 w_6},
\end{equation*}
the equations \eqref{int3} imply the relation between partition functions,
\begin{equation*}\label{int12}
\begin{aligned}
Z_{2n}^{\rm HT}(w_1, w_2, w_3, w_4, w_5, w_6)
&=\left(\frac{w_5  }{w_6  }\right)^{n}\\\
&\times Z_{2n}^{\rm HT}(a ,a ,b ,b , c, c),
\end{aligned}
\end{equation*}
and between Gibbs measures,
\begin{equation*}\label{int13}
\mu(\sg; w_1, w_2, w_3, w_4, w_5, w_6)=\mu(\sg;a e^{-\eta},a e^{\eta},b e^{-\eta},b e^{\eta}, c, c).
\end{equation*}
Furthermore, using the first equation of \eqref{int3}, we have
\begin{equation*}\label{con5}
\begin{aligned}
Z_{2n}^{\rm HT}(a,a ,b ,b , c, c)&=c^{4n^2}Z_{2n}^{\rm HT}\left(\frac{a }{c}, \frac{a}{c},
 \frac{b }{c}, \frac{b }{c}, 1, 1\right), \\ 
\mu(\sg;a ,a ,b ,b , c, c)
&=\mu\left(\sg; \frac{a }{c}, \frac{a }{c},
 \frac{b }{c}, \frac{b }{c}, 1, 1\right),
\end{aligned}
\end{equation*}
and so the model reduces to the two parameters, $\frac{a}{c}$ and $\frac{b}{c}$.

\subsection{Main results}
The main result of this paper is an asymptotic expansion of the partition function $Z_{2n}^{\rm HT}$ as $n\to\infty$. For finite $n$, a determinantal formula for this partition function was given by Kuperberg, in the spirit of the Izergin--Korepin formua for the DWBC partition function. We refer to this formula as the Izergin--Korepin--Kuperberg formula. In order to state the results, it is convenient to parametrize the weights $a, b$, and $c$ in slightly different ways in the different regions of the phase diagram. 
The phase diagram of the DWBC six-vertex model consists of three phase regions which are described nicely in terms of the parameter 
\[
\De = \frac{a^2+b^2-c^2}{2ab}.
\] If $\De >1$ then we are in the {\it ferroelectric phase}; if $\De<-1$ then we are in the {\it anti-ferroelectric phase}; and if $-1<\De<1$, then we are in the {\it disordered phase}. 

For each of the phase regions, it is convenient to parametrize the weights in slightly different ways. For the ferroelectric phase,
\begin{equation}\label{pf4}
a=\sinh(t-\ga), \quad
b=\sinh(t+\ga), \quad
c=\sinh(2|\ga|), \quad
0<|\ga|<t;
\end{equation}
for the anti-ferroelectric phase,
\begin{equation}\label{pf5}
a=\sinh(\ga-t), \quad
b=\sinh(\ga+t), \quad
c=\sinh(2\ga), \quad
|t|<\ga;
\end{equation}
and for the disordered phase
\begin{equation}\label{pf6}
a=\sin(\ga-t), \quad
b=\sin(\ga+t), \quad
c=\sin(2\ga), \quad
|t|<\ga<\frac{\pi}{2}\,.
\end{equation}

It turns out that the phase diagram for the half-turn-invariant six-vertex model with DWBC differs from that of the usual six-vertex model with DWBC by the quadratic change of weights $a\mapsto \sqrt{a}$, $b\mapsto \sqrt{b}$, $c\mapsto \sqrt{c}$. It is therefore convenient to consider the half-turn invariant model with weights $\sqrt{a}$, $\sqrt{b}$, and $\sqrt{c}$.
The Izergin--Korepin--Kuperberg formula for the partition function $Z_{2n}^{\rm HT}$ is then described in the following proposition.
\begin{prop}\label{prop:finite_n}
Consider the six-vertex model with DWBC on the $(2n)\times(2n)$ lattice with $180^\circ$ rotational symmetry with weights $w_1= w_2 = \sqrt{a}$, $w_3= w_4 = \sqrt{b}$, $w_5= w_6 = \sqrt{c}$, where the weights are parametrized by \eqref{pf4}, \eqref{pf5}, or \eqref{pf6}, depending on the phase region. Then the partition function $Z_{2n}^{\rm HT}$ is given as
\begin{equation}\label{eq:finite_n_formula}
Z_{2n}^{\rm HT}=\frac{(ab)^{2n^2}}{\prod_{j=0}^{n-1} (j!)^4}\,\tau_n^{\rm DW}\,\tau_n^{\rm HT}\,,
\end{equation}
where
\begin{equation*}\label{in6}
\tau_n^{\rm DW}=\det\left( \phi^{(j+k-2)}(t)\right)_{j,k=1}^n,\quad \phi(t)=\frac{c}{ab}\,,
\end{equation*}
and
\begin{equation}\label{in7}
\tau_n^{\rm HT}=\det\left(\psi^{(j+k-2)}(t)\right)_{j,k=1}^n,\quad \psi(t)=\frac{1}{a}
+\frac{1}{b}\,.
\end{equation}
Here $\phi^{(j+k-2)}(t)$ and $\psi^{(j+k-2)}(t)$ refer to the $(j+k-2)$th derivative with respect to $t$, and the dependence of the functions $\phi(t)$ and $\psi(t)$ on $t$ comes from the dependence of the parameters $a$ and $b$ on $t$ in parameterizations \eqref{pf4}--\eqref{pf6}.
\end{prop}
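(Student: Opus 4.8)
The plan is to obtain \eqref{eq:finite_n_formula} as the homogeneous (coalescence) limit of Kuperberg's inhomogeneous Izergin--Korepin--Kuperberg formula for the half-turn symmetric partition function. In the inhomogeneous version one assigns horizontal rapidities $x_1,\dots,x_n$ to the rows and vertical rapidities $y_1,\dots,y_n$ to the columns of one fundamental half of the $(2n)\times(2n)$ lattice, the remaining half being fixed by the $180^\circ$ symmetry through the reflected parameters. Kuperberg's formula then expresses $Z_{2n}^{\rm HT}$ as an explicit weight prefactor, divided by Vandermonde-type products, times a product of two $n\times n$ determinants: the first has entries assembled from the single-vertex ratio $c/(ab)$ at the inhomogeneous arguments and reproduces the ordinary domain-wall (Izergin--Korepin) structure, while the second --- the genuinely new object forced by the half-turn constraint, in which the reflection symmetry replaces the product weight $c/(ab)$ by the additive combination $1/a+1/b$ --- has entries built from $1/a+1/b$. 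My first step would be to record this formula precisely and to fix the normalization so that the vertex weights tend to $\sqrt a,\sqrt b,\sqrt c$ in the homogeneous limit, as demanded by the proposition.

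The analytic core is the confluent limit in which all $x_i\to\xi$ and all $y_j\to\eta$, so that every vertex carries the same weight. For a smooth function $f$ the basic coalescence identity reads
\begin{equation*}
\det\big(f(x_i,y_j)\big)_{i,j=1}^n
=\Big[\prod_{1\le i<j\le n}(x_j-x_i)(y_j-y_i)\Big]\,
\frac{1}{\prod_{k=0}^{n-1}(k!)^2}\,
\det\big(\d_x^{\,i-1}\d_y^{\,j-1}f\big)_{i,j=1}^n\,(1+o(1)),
\end{equation*}
with all derivatives evaluated at $(\xi,\eta)$. Since the weights $a,b,c$ depend on the rapidities only through their difference, which becomes the spectral variable $t$ of \eqref{pf4}--\eqref{pf6}, one has $\d_x^{\,i-1}\d_y^{\,j-1}f=(-1)^{j-1}f^{(i+j-2)}$, so that, after extracting the sign $(-1)^{j-1}$ from each column, the mixed-derivative determinant collapses to the Hankel determinant $(-1)^{\binom{n}{2}}\det\big(f^{(i+j-2)}(t)\big)$. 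Applying this with $f=c/(ab)$ turns the first factor into $\tau_n^{\rm DW}$ and with $f=1/a+1/b$ turns the second into $\tau_n^{\rm HT}$ as in \eqref{in7}; the two spurious signs $(-1)^{\binom{n}{2}}$ multiply to $+1$ and hence disappear, while the Vandermonde factors produced by the two confluent limits cancel against the corresponding Vandermonde-type products in the denominator of Kuperberg's formula, leaving a finite limit.

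It remains to collect the prefactors. Each of the two coalescences contributes a denominator $\prod_{k=0}^{n-1}(k!)^2$, which together account for the power $\prod_{j=0}^{n-1}(j!)^4$ in \eqref{eq:finite_n_formula}; and the weight prefactor of Kuperberg's formula, evaluated on the square-root weights $\sqrt a,\sqrt b,\sqrt c$ in the homogeneous limit, assembles into $(ab)^{2n^2}$. I expect the main obstacle to lie precisely in this bookkeeping of the homogeneous limit: one must verify that the numerous site-dependent normalization factors of the inhomogeneous formula combine, after coalescence, into the single clean power $(ab)^{2n^2}$, confirm that the entries of the second determinant really do reduce to derivatives of $1/a+1/b$ in the variable $t$ (this is where the reflection structure of the half-turn symmetry must be used carefully), and check that the square-root normalization of the weights is consistent across the three parametrizations \eqref{pf4}--\eqref{pf6}. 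The determinantal structure and the Hankel collapse are robust; the difficulty is entirely in matching these constants.
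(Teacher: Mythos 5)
Your overall route---Kuperberg's inhomogeneous formula for the half-turn case followed by a confluent (homogeneous) limit---is exactly the route the paper takes, and your coalescence mechanics are sound and in fact more explicit than what the paper writes down: the confluence identity, the reduction of the mixed-derivative determinant to Hankel form using the difference dependence of the weights, the cancellation of the two spurious signs $(-1)^{\binom{n}{2}}$, and the two factors $\prod_{k=0}^{n-1}(k!)^2$ accounting for $\prod_{j=0}^{n-1}(j!)^4$ all match the procedure the paper delegates to \cite[Section 5.5]{BL14}.

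There is, however, one substantive step that you have filed under ``bookkeeping of the homogeneous limit'' which is in fact the only piece of combinatorial content in the paper's proof, and your plan as written stalls on it. Kuperberg's Theorem 10 in \cite{Kup02} is \emph{not} a formula for the partition function of the $(2n)\times(2n)$ half-turn-symmetric model; it is a formula for the partition function $Z^{\rm HTBC}_{n,2n}$ of a six-vertex model on an $n\times(2n)$ lattice whose top boundary edges are pairwise linked (the $k$-th vertex from the left joined to the $k$-th from the right). The bridge to the proposition is a bijection: gluing two copies of an HTBC state, one rotated by $180^\circ$, along the top boundary produces exactly an HT-symmetric DWBC state on the full lattice. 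Under the rotation, vertex types $1$ and $2$ are interchanged, types $3$ and $4$ are interchanged, and types $5$, $6$ are fixed; since the lattice has even side length the rotation has no fixed vertices, so the vertices of the full lattice fall into orbits of size two, each orbit carrying weight $w_1w_2=a$, $w_3w_4=b$, or $w_5w_6=c$ when the full-lattice weights are $\sqrt a,\sqrt b,\sqrt c$. This gives the identity
\begin{equation*}
Z^{\rm HTBC}_{n,2n}(a,a,b,b,c,c)=Z^{\rm HT}_{2n}\bigl(\sqrt a,\sqrt a,\sqrt b,\sqrt b,\sqrt c,\sqrt c\bigr),
\end{equation*}
which is precisely where the square roots in the hypothesis of the proposition come from. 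This is an orbit-counting argument, not an analytic normalization of rapidities: without it you can neither justify applying Kuperberg's determinant formula to $Z_{2n}^{\rm HT}$ nor explain the weight normalization $\sqrt a,\sqrt b,\sqrt c$ that the statement demands. Once this identity is in place, the remainder of your plan (confluence plus collection of the prefactor $(ab)^{2n^2}$) goes through essentially as you describe.
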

This proposition follows from the result of Kuperberg for the inhomogeneous six-vertex model with half-turn boundary conditions (HTBC) presented in \cite[Theorem 10]{Kup02}. The formula presented in that paper concerns the six-vertex model on a rectangular lattice of size $n\times (2n)$ with additional edges connecting some of the vertices in the top row. Specifically, the vertex in the top row $k$ steps from the left is connected to the one which is $k$ steps from the right, for all $k=1, 2, \dots, n$, see Figure \ref{arrow_conf}. The states of the model are again realized by placing arrows on the edges of the graph obeying the ice rule, subject to the following boundary conditions. The arrows on the left and right boundaries are fixed to point out of the lattice, and the arrows on the bottom boundary are fixed to point in. The arrows on the top boundary are free, up to the constraint imposed by the connection of the $k$-th and $(2n-k)$-th vertices from the left.
\begin{figure}
\begin{center}\scalebox{0.3}{\includegraphics{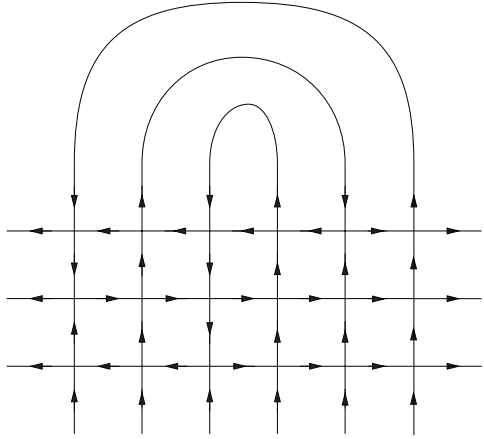}}\end{center}
\caption{An example of the arrow configuration with half-turn boundary conditions on the $3\times 6$ lattice.}
\label{arrow_conf}
\end{figure}
Denote the partition function for this model as $Z_{n, 2n}^{\rm HTBC}$.
This model is related to the six-vertex model with domain wall boundary conditions on a lattice of size $(2n) \times (2n)$ with a rotational symmetry in a straightforward way. If one were to make two copies a HTBC state, rotate one of the copies $180^\circ$, and glue the two copies together along the top boundary, the result would be a six-vertex state on the $(2n) \times (2n)$ lattice with domain wall boundary conditions and $180^\circ$ rotational symmetry. Under rotation by $180^\circ$, vertices of type 1 and 2 are interchanged, vertices of types 3 and 4 are interchanged, and vertices of type 5 and 6 are invariant. It follows that
\begin{equation*}
Z_{n, 2n}^{\rm HTBC}(a ,a, b ,b , c, c) = Z_{2n}^{\rm HT}(\sqrt{a}, \sqrt{a}, \sqrt{b}, \sqrt{b}, \sqrt{c}, \sqrt{c}).
\end{equation*}
In order to obtain \eqref{eq:finite_n_formula}, one needs to take a homogeneous limit of Kuperberg's formula for $Z_{n, 2n}^{\rm HTBC}(a ,a, b ,b , c, c)$. This procedure is outlined in, e.g., \cite[Section 5.5]{BL14}.

Observe that we can write \eqref{eq:finite_n_formula} as
 \begin{equation}\label{in8}
Z_{2n}^{\rm HT} = \left(Z_n^{\rm DW}\right)\left(\widetilde{Z}_n^{\rm HT}\right), 
\end{equation}
where
\begin{equation}\label{in9}
Z_n^{\rm DW} = \frac{(ab)^{n^2}}{\prod_{j=0}^{n-1} j!^2} \tau_n^{\rm DW}, \qquad \widetilde{Z}_n^{\rm HT} = \frac{(ab)^{n^2}}{\prod_{j=0}^{n-1} j!^2} \tau_n^{\rm HT}.
\end{equation}
The factor $Z_n^{\rm DW} $ is exactly the partition function for the six-vertex model on an $n\times n$ lattice with domain wall boundary conditions, and has been evaluated asymptotically as $n\to\infty$ in each of the phase regions a series of papers \cite{BF06, BL09-1, BL10}. 
Thus to evaluate \eqref{eq:finite_n_formula} in the thermodynamic limit, we need only evaluate $\widetilde{Z}_n^{\rm HT}$, or equivalently the determinant $\tau_n^{{\rm HT}}$, asymptotically as $n\to\infty$. 
That analysis comprises the main technical work of this paper, and it leads to the following results for the asymptotics of the half-turn-invariant partition function.

\begin{theo}[Disordered phase]\label{theorem:D}
Let $Z_{2n}^{\rm HT}$ be the partition function for the six-vertex model with DWBC on the lattice of size $(2n)\times (2n)$ and half-turn symmetry, with the weights
\begin{equation}\label{wd}
\begin{aligned}
&w_1=w_2=\sqrt{\sin(\ga-t)}, \quad
w_3=w_4=\sqrt{\sin(\ga+t)}, \\
&w_5=w_6=\sqrt{\sin(2\ga)}, \quad
|t|<\ga<\frac{\pi}{2}\,.
\end{aligned}
\end{equation}
Then as $n\to \infty$, 
\begin{equation*}
Z_{2n}^{\rm HT} =C n^{\kappa} F^{2n^2}(1+\ocal(n^{-1})),
\end{equation*}
where 
\begin{equation*}
F=\frac{\pi \sin(\ga-t)\sin(\ga+t)}{2\ga\cos \frac{\pi t}{2\ga}}, \quad \kappa = \frac{1}{6} - \frac{\ga^2}{3\pi(\pi-2\ga)},
\end{equation*}
and the constant $C$ is of the form
\begin{equation}\label{constant:D}
 C=\left[\cos\left(\frac{\pi t}{2\ga}\right)\right]^{\kappa} D(\ga),
\end{equation}
where $D(\ga)$ does not depend on $t$.
\end{theo}

 \begin{theo}[Antiferroelectric phase]\label{theorem:AF}
Let $Z_{2n}^{\rm HT}$ be the partition function for the six-vertex model with DWBC on the lattice of size $(2n)\times (2n)$ and half-turn symmetry, with the weights
\begin{equation}\label{wd}
\begin{aligned}
&w_1=w_2=\sqrt{\sinh(\ga-t)}, \quad
w_3=w_4=\sqrt{\sinh(\ga+t)}, \\
&w_5=w_6=\sqrt{\sinh(2\ga)}, \quad
|t|<\ga\,.
\end{aligned}
\end{equation}
Then as $n\to \infty$,
\begin{equation*}
Z_{2n}^{\rm HT} =C \Jth_3(n\om)\Jth_4(n\om)  F^{2n^2}(1+\ocal(n^{-1})),
\end{equation*}
where 
\begin{equation*}
F=\frac{\pi \sinh(\ga-t)\sinh(\ga+t) \Jth_1'(0)}{\ga\Jth_1(\om)}, \quad \om = \frac{\pi}{2}(1+\z), \quad \z=\frac{t}{\ga},
\end{equation*}
$\Jth_1$, $\Jth_3$, and $\Jth_4$ are the Jacobi theta functions (see \eqref{main8_af} for their definitions) with elliptic nome $q=e^{-\frac{\pi^2}{2\ga}},$
and the constant $C$ is independent of both $n$ and $t$.
\end{theo} 


 \begin{theo}[Ferroelectric phase]\label{theorem:F}
Let $Z_{2n}^{\rm HT}$ be the partition function for the six-vertex model with DWBC on the lattice of size $(2n)\times (2n)$ and half-turn symmetry, with the weights
\begin{equation}\label{wd}
\begin{aligned}
&w_1=w_2=\sqrt{\sinh(t-\ga)}, \quad
w_3=w_4=\sqrt{\sinh(t+\ga)}, \\
&w_5=w_6=\sqrt{\sinh(2\ga)}, \quad
0<\ga<t\,.
\end{aligned}
\end{equation}
Then for any $\ep>0$ as $n\to \infty$, 
\begin{equation*}
Z_{2n}^{\rm HT} =C G^{2n} F^{2n^2}(1+\ocal(e^{-n^{1-\ep}})),
\end{equation*}
where 
\begin{equation}\label{F_constants}
F=\sinh(t+\ga), \quad G= e^{\ga-t}, \quad C=(1+e^{-4\ga})(1-e^{-4\ga}).
\end{equation}
\end{theo}

The proofs of the above theorems follow very closely the proofs of the asymptotic expansion of $Z_n^{\rm DW}$, in which the Hankel determinant $\tau_n^{\rm DW}$ is expressed in terms of a system of orthogonal polynomials which is then evaluated asymptotically as the degree of the polynomial becomes large. Any information about the constant term in the asymptotic expansion is gleaned from the fact that $\tau_n^{\rm DW}$ satisfies the {\it Toda equation},
\begin{equation}\label{toda}
\frac{\d^2}{\d t^2} \log \tau_n = \frac{\tau_{n+1}\tau_{n-1}}{\tau_n^2}.
\end{equation}

In each of the phase regions, the asymptotic analysis of $\tau_n^{\rm HT}$ is very similar. It also satisfies the Toda equation, so \eqref{toda} holds for either $\tau_n=\tau_n^{\rm DW}$ or $\tau_n=\tau_n^{\rm HT}$.  Also $\tau_n^{\rm HT}$ may be expressed in terms of a system of orthogonal polynomials, and it turns out these orthogonal polynomials are very similar to the ones for $\tau_n^{\rm DW}$. Indeed, asymptotically they differ only in subleading terms.  

\subsection{Plan for the rest of the paper}
In Section \ref{laplace}, the symbol $\psi(t)$ appearing in the Hankel determinant $\tau_n^{\rm HT}$ is expressed as a Laplace-type transform of some measure on the real line in each of the phase regions. This is the first step to expressing $\tau_n^{\rm HT}$ in terms of orthogonal polynomials. Then Theorems \ref{theorem:D}, \ref{theorem:AF}, and \ref{theorem:F} are proven in Sections \ref{d-phase}, \ref{af-phase}, and \ref{f-phase}, respectively. In each case, two systems of orthogonal polynomials are discussed: those relevant for $\tau_n^{\rm DW}$ and those for $\tau_n^{\rm HT}$. The asymptotic results for the former system are recalled, and the new asymptotic results for the latter system are stated. Since the analysis is very similar for both systems of orthogonal polynomials, we simply describe the adjustments that must be made to the analysis of the polynomials corresponding to $\tau_n^{\rm DW}$, as presented in \cite{BL14}, in order to apply to the polynomials corresponding to $\tau_n^{\rm HT}$. In each of the sections \ref{d-phase}--\ref{f-phase} we use the same notation for the relevant systems of orthogonal polynomials even though they are different in each section. We trust it will not confuse the reader.

\subsection{Outlook}
As noted earlier in the introduction, Kuperberg \cite{Kup02} has given exact formulas for the partition functions of the six-vertex model with several different boundary conditions/symmetry classes. In \cite[Theorem 10]{Kup02}, we find a list of eight such formulas. This list includes the well-known DWBC partition function as well as the partition function for the U-boundary condition which had been previously found by Tsuchiya \cite{Tsu98}. Tsuchiya's determinantal formula is generalized to the UU-boundary condition, and new formulas are also given for half-turn (HT), quarter-turn (QT), off-diagonal (O), off-diagonal-off-anti-diagonal (OO), and U-off-antidiagonal (UO) boundary conditions. After DWBC and HTBC, it is natural to ask whether the remaining partition functions in this list may be analyzed asymptotically.

For the U- and UU-boundary conditions, the formulas for $Z_n$ are determinantal and involve the Tsuchiya determinant. After taking the homogeneous limit this determinant is not a Hankel determinant, but may be expressed in terms of certain {\it bi}-orthogonal polynomials. Rather than satisfying the Toda equation \eqref{toda}, the Tsuchiya determinant satisfies a two dimensional version of \eqref{toda}. This fact was used in \cite{RK15} to derive the free energy for the U-boundary condition partition function in the disordered phase. A rigorous asymptotic analysis following \cite{BL14} would require rather general machinery for asymptotic analysis of bi-orthogonal polynomials. At the moment such machinery is missing.

For the remaining four partition functions formulated by Kuperberg (QT, O, OO, UO), the formulas involve Pfaffians rather than determinants. For Pfaffians, orthogonal polynomial methods do not apply, although it may be possible to write those formulas in terms of systems of skew-orthogonal polynomials. Again there is not currently any general machinery for asymptotic analysis of skew-orthogonal polynomials, so the methods of this paper and \cite{BL14} do not apply.

\section {The Laplace transform}\label{laplace}
The first step in the asymptotic analysis of $\tau_n^{\rm HT}$ is to write the symbol $\psi(t)= 1/a + 1/b$ as the Laplace transform of some measure on the real line. This representation is different for the different phase regions, and is given in the following proposition.
\begin{prop}\label{lt} 
 In the ferroelectric phase, in which $a$ and $b$ parametrized as \eqref{pf4}, we have that
\begin{equation}\label{lt3a}
\psi(t)=\frac{1}{\sinh(t-\ga)}
+\frac{1}{\sinh(t+\ga)}=2\sum_{k=1}^\infty [e^{-2k (t-\ga)}+e^{-2k(t+\ga)}].
\end{equation}
 In the antiferroelectric phase, in which $a$ and $b$ parametrized as \eqref{pf5}, we have that
 \begin{equation}\label{lt3b}
\psi(t)=\frac{1}{\sinh(\ga-t)}
+\frac{1}{\sinh(\ga+t)}=2\sum_{k=-\infty}^\infty e^{(2k+1)t-|2k+1|\ga}.
\end{equation}
In the disordered phase, in which $a$ and $b$ parametrized as \eqref{pf6}, we have that
\begin{equation}\label{lt3}
\psi(t)=\frac{1}{\sin(\ga-t)}
+\frac{1}{\sin(\ga+t)}=\int_{-\infty}^\infty e^{t\la} m(\la)\,d\la\,,
\end{equation}
where 
\begin{equation}\label{lt4}
m(\la)=\frac{e^{-\ga\la}}{1+e^{-\pi\la}}+\frac{e^{\ga\la}}{1+e^{\pi\la}}\,.
\end{equation}
\end{prop}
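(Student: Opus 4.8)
The plan is to dispatch the two hyperbolic phases by elementary geometric-series expansions and to treat the disordered phase by a direct evaluation of the Laplace integral; only the last is a computation of any substance. For the ferroelectric phase I would start from the identity, valid for any $x>0$,
\[
\frac{1}{\sinh x}=\frac{2e^{-x}}{1-e^{-2x}}=2\sum_{k=1}^\infty e^{-(2k-1)x},
\]
the geometric series converging absolutely since $e^{-2x}<1$. The parametrization \eqref{pf4} with $0<|\ga|<t$ makes both $t-\ga$ and $t+\ga$ positive, so I may apply this expansion to each summand of $\psi$ and add the resulting series term by term; collecting the exponentials produces the representation \eqref{lt3a}.

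The antiferroelectric phase is identical in spirit. Under \eqref{pf5} the hypothesis $|t|<\ga$ again forces both $\ga-t$ and $\ga+t$ to be positive, so the same expansion applies to each term. The one extra step is reindexing into a bilateral sum: the expansion of $1/\sinh(\ga+t)$ supplies the exponentials $e^{(2j-1)t-(2j-1)\ga}$, the positive odd frequencies, while $1/\sinh(\ga-t)$ supplies $e^{-(2j-1)t-(2j-1)\ga}$, the negative odd frequencies. Setting $m=2k+1$ and noting that $|m|\ga$ records the correct exponential decay in each direction, the two one-sided series merge into the single sum over $\Z$ displayed in \eqref{lt3b}.

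The disordered phase is the substantive case, and here I would verify \eqref{lt3}--\eqref{lt4} in the forward direction, computing the integral and recovering $\psi$. Splitting $m(\la)$ along \eqref{lt4} gives the two integrals
\[
\int_{-\infty}^\infty\frac{e^{(t-\ga)\la}}{1+e^{-\pi\la}}\,d\la
\qquad\text{and}\qquad
\int_{-\infty}^\infty\frac{e^{(t+\ga)\la}}{1+e^{\pi\la}}\,d\la,
\]
each of which is an instance of the classical integral $\int_{-\infty}^\infty e^{ax}/(1+e^{x})\,dx=\pi/\sin(\pi a)$ for $0<a<1$ (equivalently the reflection formula $\Gamma(a)\Gamma(1-a)=\pi/\sin(\pi a)$ applied to the Beta integral $\int_0^\infty y^{a-1}/(1+y)\,dy$). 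The linear substitutions $x=-\pi\la$ and $x=\pi\la$ turn the first integral into $1/\sin(\ga-t)$ and the second into $1/\sin(\ga+t)$, whose sum is exactly $\psi(t)$.

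The main obstacle, such as it is, is bookkeeping with the parameter ranges rather than anything conceptual. In the disordered case one must check that the two effective exponents $(\ga-t)/\pi$ and $(\ga+t)/\pi$ both lie in $(0,1)$ and that $m(\la)$ decays like $e^{-(\pi-\ga)|\la|}$, so that $e^{t\la}m(\la)$ is integrable; the hypothesis $|t|<\ga<\frac{\pi}{2}$ is precisely what secures both. In the hyperbolic cases the only care needed is with the parity of the frequencies and with the reindexing that yields the bilateral sum. No machinery beyond the geometric series and the Beta-integral evaluation is required.
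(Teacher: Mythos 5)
Your treatment of the antiferroelectric and disordered cases is correct, and for the disordered phase you take a genuinely different route from the paper. You evaluate $\int_{-\infty}^\infty e^{t\la}m(\la)\,d\la$ in the forward direction, splitting $m$ and reducing each piece to the Beta-type integral $\int_{-\infty}^\infty e^{ax}/(1+e^x)\,dx=\pi/\sin(\pi a)$, $0<a<1$; the paper instead works backwards, setting $t=i\tau$, writing $m_\pm$ as inverse Fourier transforms of $1/\sin(\ga\pm i\tau)$, and determining them by shifting the contour up by $i\pi$ and collecting the residue at $\tau=i\ga$, which yields the functional equation $m_+(\la)=-m_+(\la)e^{\pi\la}+e^{\ga\la}$. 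Your version is the cleaner one as a \emph{proof} (the paper's argument is really a derivation of $m$, which then still requires a verification of the kind you give), and under $|t|<\ga<\pi/2$ both of your integrals converge absolutely, so the splitting is legitimate. Two small slips, neither of which damages the argument: in the antiferroelectric case you have the attributions reversed (it is $1/\sinh(\ga-t)$ that supplies the positive frequencies $e^{(2j-1)t-(2j-1)\ga}$ and $1/\sinh(\ga+t)$ the negative ones), and $m(\la)$ decays like $e^{-\ga|\la|}$, not $e^{-(\pi-\ga)|\la|}$, since the slower exponential dominates when $\ga<\pi-\ga$; integrability of $e^{t\la}m(\la)$ therefore requires exactly $|t|<\ga$, which is the hypothesis.

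The ferroelectric case is where the genuine problem lies, though it is not primarily yours. Your expansion $1/\sinh x=2\sum_{k\ge1}e^{-(2k-1)x}$ is correct and gives
\[
\psi(t)=2\sum_{k=1}^\infty\bigl[e^{-(2k-1)(t-\ga)}+e^{-(2k-1)(t+\ga)}\bigr],
\]
a sum over \emph{odd} frequencies. But \eqref{lt3a} as printed is a sum over \emph{even} frequencies, and no ``collecting of exponentials'' can convert one into the other: the two expressions are not equal (at $t=2$, $\ga=1$ the left side of \eqref{lt3a} is $\approx 0.951$ while the printed right side is $\approx 0.318$). Equation \eqref{lt3a} is misprinted, and the paper's own proof commits the identical faulty reindexing in its last line: its first four lines correctly produce the odd-frequency sum, equivalently $2e^{-(t-\ga)}\sum_{k\ge0}\bigl[e^{-2k(t-\ga)}+e^{-2k(t+\ga)-2\ga}\bigr]$, which is the form actually used later in the paper (it is precisely the weight $e^{-2\ell(t-\ga)}+e^{-2\ell(t+\ga)-2\ga}$, $\ell\ge0$, appearing in \eqref{f:7}), but the final line replaces $2k-1$ by $2k$. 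So your mathematics is right; the gap in your write-up is the unjustified (and unjustifiable) claim that your expansion ``produces the representation \eqref{lt3a}.'' A complete treatment must instead note that \eqref{lt3a} should read with exponents $2k-1$ in place of $2k$, and that it is this corrected, odd-frequency form that feeds into the ferroelectric-phase orthogonal polynomials.
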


The form of these Laplace transform representations of $\psi(t)$ determines the orthogonal polynomials which appear in subsequent asymptotic analysis: in Section \ref{d-phase} we consider continuous orthogonal polynomials on $\R$ with respect to the weight $e^{tx}m(x)$ which appears in the integrand of \eqref{lt3}; in Section \ref{af-phase} we consider the discrete weight $(2k+1)t-|2k+1|\ga$, $k\in \Z$ as in \eqref{lt3b}; and in Section \ref{f-phase} we consider the discrete weight $e^{-2k (t-\ga)}+e^{-2k(t+\ga)}$, $k\in \Z_+$ which appears in \eqref{lt3a}.

\begin{proof}[Proof of Proposition \ref{lt}] Equations \eqref{lt3a} and \eqref{lt3b} follow in a straightforward way using geometric series.
For \eqref{lt3a} we have
\begin{equation*}
\begin{aligned}
\frac{1}{\sinh(t-\ga)}+\frac{1}{\sinh(t+\ga)}&=\frac{2e^{-(t-\ga)}}{1-e^{-2(t-\ga)}}+\frac{2e^{-(t+\ga)}}{1-e^{-2(t+\ga)}} \\
&=2e^{-(t-\ga)}\sum_{k=0}^\infty e^{-2k(t-\ga)}+2e^{-(t+\ga)}\sum_{k=0}^\infty e^{-2k(t+\ga)} \\
&=2\sum_{k=0}^\infty [e^{-(2k+1) (t-\ga)}+e^{-(2k+1) (t+\ga)}] \\
&=2e^{-(t-\ga)}\sum_{k=0}^\infty [e^{-2k (t-\ga)}+e^{-2k(t+\ga)-2\ga}] \\
&=2\sum_{k=1}^\infty [e^{-2k (t-\ga)}+e^{-2k(t+\ga)}],
\end{aligned}
\end{equation*}
where we have used the restriction $|\ga|<t$ to ensure convergence of the series. The proof of \eqref{lt3b} is very similar, where we use $|t|<\ga$ to ensure convergence, and we omit it here. We are left only to prove \eqref{lt3} and \eqref{lt4}.

Let $t=i\tau$. Then \eqref{lt3} reads
\begin{equation*}\label{lt5}
\psi(i\tau)=\frac{1}{\sin(\ga-i\tau)}
+\frac{1}{\sin(\ga+i\tau)}=\int_{-\infty}^\infty e^{i\tau\la} m(\la)\,d\la
=\int_{-\infty}^\infty e^{i\tau\la} [m_-(\la)+m_+(\la)]\,d\la\,,
\end{equation*}
hence, taking the inverse Fourier transform,
\begin{equation*}\label{lt6}
m_+(\la)=\frac{1}{2\pi} 
\int_{-\infty}^\infty \frac{1}{\sin(\ga+i\tau)}\,e^{-i\tau\la}d\tau\,.
\end{equation*}
Shifting the contour of integration up by $i\pi$ and evaluating the residue at $\tau=i\ga$, we obtain the equation,
\begin{equation*}\label{lt7}
m_+(\la)=-m_+(\la)e^{\pi\la}+e^{\ga\la},
\end{equation*}
hence
\begin{equation*}\label{lt8}
m_+(\la)=\frac{e^{\ga\la}}{1+e^{\pi\la}}\,.
\end{equation*}
Similarly, for
\begin{equation*}\label{lt9}
m_-(\la)=\frac{1}{2\pi} 
\int_{-\infty}^\infty \frac{e^{-i\tau\la}}{\sin(\ga-i\tau)}\,d\tau\,.
\end{equation*}
we obtain the equation,
\begin{equation*}\label{lt10}
m_-(\la)=-m_+(\la)e^{-\pi\la}+e^{-\ga\la},
\end{equation*}
hence
\begin{equation*}\label{lt11}
m_-(\la)=\frac{e^{-\ga\la}}{1+e^{-\pi\la}}\,,
\end{equation*}
and \eqref{lt3} follows.
\end{proof}


\section{Disordered Phase}\label{d-phase}

Let us consider two different systems of orthogonal polynomials on $\R$. They are defined in terms of the two orthogonality weights
\begin{equation}\label{wt1}
w^{\rm DW}(x)=\frac{e^{tx} \sinh\left[\left(\frac{\pi}{2}-\ga\right)x\right]}{\sinh\frac{\pi x}{2}}, \qquad w^{\rm HT}(x)=\frac{e^{tx} \cosh\left[\left(\frac{\pi}{2}-\ga\right)x\right]}{\cosh\frac{\pi x}{2}}.
\end{equation}
We then define two systems of monic orthogonal polynomials $\{P_{k}^{\rm DW}(x)\}_{k=0}^\infty$ and $\{P_{k}^{\rm HT}(x)\}_{k=0}^\infty$ such that
\begin{equation}\label{wt2}
\int_{-\infty}^\infty P_{j}^{\rm DW}(x)P_{k}^{\rm DW}(x)w^{\rm DW}(x)=h_k^{\rm DW}\de_{jk}, \\
\end{equation}
\begin{equation}\label{wt2a}
\int_{-\infty}^\infty P_{j}^{\rm HT}(x)P_{k}^{\rm HT}(x)w^{\rm HT}(x)=h_k^{\rm HT}\de_{jk}, 
\end{equation}
where $\{h_k^{\rm DW}\}_{k=0}^\infty$ and $\{h_k^{\rm HT}\}_{k=0}^\infty$ are two sequences of normalizing constants, and $P_{k}^{\rm DW}$ and $P_{k}^{\rm HT}$ are monic polynomials of degree exactly $k$. Notice that the weight $w^{\rm HT}(x)$ is the same as $e^{tx}m(x)$ where $m(x)$ is defined in \eqref{lt4}. The polynomials orthogonal with respect to $w^{\rm DW}(x)$ are the ones which appear in the study of the partition function of the six-vertex model with domain wall boundary conditions. Due to a general fact about Hankel determinants, see e.g. \cite[equations (4.4.8)--(4.4.14)]{BL14}, the two determinants $\tau_n^{\rm DW}$ and $\tau_n^{\rm HT}$ can be written explicitly in terms of the normalizing constants $h_k^{\rm DW}$ and $h_k^{\rm HT}$, respectively. Indeed we have
\begin{equation*}\label{wt3}
\tau_n^{\rm DW}=\prod_{k=0}^{n-1} h_k^{\rm DW}, \qquad 
\tau_n^{\rm HT}=\prod_{k=0}^{n-1} h_k^{\rm HT}.
\end{equation*}
Thus the partition function \eqref{eq:finite_n_formula} may be written as
\begin{equation}\label{wt4}
Z_{2n}^{\rm HT} = (ab)^{2n^2} \prod_{k=0}^{n-1} \frac{h_k^{\rm DW}}{(k!)^2}\prod_{k=0}^{n-1} \frac{h_k^{\rm HT}}{(k!)^2}.
\end{equation}

In the work \cite{BF06}, see also \cite[Chapter 6]{BL14}, a very precise asymptotic formula is obtained for the sequence of constants $h_k^{\rm DW}$ as $k\to\infty$ using the Riemann--Hilbert method.  The following result is given in \cite[Proposition 6.1.2]{BL14}.
\begin{prop}\label{dis_DW_hk} 
As $k\to \infty$, the normalizing constants $h_k^{\rm DW}$ satisfy
\begin{equation}\label{dpmr4a}
\frac{h_k^{\rm DW}}{(k!)^2}=G^{2k+1} \left(1+\frac{\kappa^{\rm DW}}{k}+\ep_k+\mathcal O(k^{-2})\right)\,,
\end{equation}
where
\begin{equation}\label{dpmr4b}
\begin{aligned}
G=\frac{\pi}{2\ga \cos\left(\frac{\pi\z}{2}\right)}\,, \quad \z=\frac{t}{\ga}, \quad 
\kappa^{\rm DW}=\frac{1}{12}-\frac{2\ga^2}{3\pi(\pi-2\ga)}\,,
\end{aligned}
\end{equation}
and
\begin{equation}\label{dpmr4c}
\ep_k=\left[\cos(k\om)+\tan\left(\frac{\pi\z}{2}\right)\sin(k\om)\right] \sum_{j: \ \kappa_j < 2} c_j k^{-\kappa_j}\,.
\end{equation}
Here 
\begin{equation}\label{dpmr4d}
\om=-\pi(1+\z),\qquad \kappa_j=1+\frac{2j}{\frac{\pi}{2\ga}-1}>1\,,\qquad j=1,2,\ldots,
\end{equation}
and the numbers $c_j$ are equal to
\begin{equation}\label{dpmr4f}
c_j=d_j\sin\left(\frac{\pi j}{1-\frac{2\ga}{\pi}}\right),\qquad j=1,2,\ldots,
\end{equation}
where $d_j\not=0$.
\end{prop}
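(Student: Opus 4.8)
The plan is to establish Proposition~\ref{dis_DW_hk} through a Deift--Zhou steepest-descent analysis of the Riemann--Hilbert problem (RHP) characterizing the monic orthogonal polynomials $P_k^{\rm DW}$. First I would write the Fokas--Its--Kitaev RHP: a $2\times2$ matrix $Y(z)$, analytic on $\C\setminus\R$, with jump $Y_+(x)=Y_-(x)\bigl(\begin{smallmatrix}1 & w^{\rm DW}(x)\\ 0 & 1\end{smallmatrix}\bigr)$ on $\R$ and normalization $Y(z)=(I+Y_1z^{-1}+\ocal(z^{-2}))\,\diag(z^k,z^{-k})$ at infinity. The target enters directly through the large-$z$ data, since $h_k^{\rm DW}=-2\pi i\,(Y_1)_{12}$ while $h_{k-1}^{\rm DW}$ is governed by $(Y_1)_{21}$, so the whole proposition reduces to an asymptotic expansion of $Y_1$. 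The decisive structural feature is that $w^{\rm DW}(x)=e^{tx}\sinh[(\tfrac{\pi}{2}-\ga)x]/\sinh\tfrac{\pi x}{2}$ decays only at the \emph{linear} exponential rate $e^{-(\ga\mp t)|x|}$, which forces the zeros of $P_k^{\rm DW}$ and the recurrence coefficients to grow like $k$ (in agreement with the ratio $h_k/h_{k-1}\sim k^2G^2$ implied by \eqref{dpmr4a}--\eqref{dpmr4b}). Accordingly I would rescale $x=ky$, after which the weight behaves as $e^{-kV(y)}$ with effective external field $V(y)=\ga|y|-ty$, up to exponentially small corrections from the subleading terms of $\sinh[(\tfrac{\pi}{2}-\ga)x]/\sinh\tfrac{\pi x}{2}$.

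Next I would carry out the equilibrium-measure and $g$-function steps. For the convex, piecewise-linear field $V(y)=\ga|y|-ty$, whose minimum sits at the corner $y=0$ (here $\ga>|t|$), the equilibrium measure is supported on a single interval $[y_-,y_+]$ with $y_-<0<y_+$; its density $\psi$ vanishes like a square root at the two soft edges $y_\pm$, while acquiring a logarithmic singularity at the interior corner, produced by the jump $V'(0^+)-V'(0^-)=2\ga$ (inverting the finite Hilbert transform $\mathrm{PV}\!\int\psi(s)/(y-s)\,ds=\tfrac12 V'(y)$ yields a $\log|y|$ term). The mass $\theta_+=\mu([0,y_+])$ to the right of the corner, equal to $\tfrac12$ when $t=0$ by symmetry, will control the oscillation frequency through $\om=-2\pi\theta_+=-\pi(1+\z)$. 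With the $g$-function in hand I would perform the normalizing transformation $Y\mapsto T$ and then open lenses $T\mapsto S$, so that the jumps off the support become exponentially close to the identity. The outer parametrix $N(z)$ is built from the $g$-function together with the Szeg\H{o} function of $w^{\rm DW}$ on $[y_-,y_+]$; these, combined with the rescaling Jacobians $k^{2k}$ and Stirling's formula (which account for the reference scale $(k!)^2$), produce the leading factor $G^{2k+1}$ with $G=\pi/(2\ga\cos\tfrac{\pi\z}{2})$.

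The remaining and hardest ingredient is the local parametrices. At the soft edges $y_\pm$ I would use the standard Airy parametrices; their $1/k$ matching data contribute to the regular correction $\kappa^{\rm DW}/k$. The genuinely delicate piece is the local parametrix at the interior corner $y=0$, where three difficulties combine: $V$ has a corner, the density $\psi$ is logarithmically unbounded, and---most importantly---in the relevant local variable $x=ky=\ocal(1)$ the \emph{exact} weight $w^{\rm DW}$ displays its full array of poles at $x\in 2i\,(\Z\setminus\{0\})$, which on the rescaled axis accumulate at the origin with spacing $\ocal(1/k)$ matching the local zero spacing of $P_k^{\rm DW}$. Consequently no constant-weight or polynomial model suffices: the origin parametrix must be solved as a dedicated model RHP that retains the full meromorphic structure $e^{tx}\sinh[(\tfrac{\pi}{2}-\ga)x]/\sinh\tfrac{\pi x}{2}$. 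I expect this model to be solvable in closed form, and its matching to $N(z)$ along the boundary of the local disk is what generates the oscillatory corrections: the $k$-dependence enters through a phase $e^{ik\om}$, and the mismatch between model and outer solution expands in powers $k^{-\kappa_j}$ with $\kappa_j=1+2j/(\tfrac{\pi}{2\ga}-1)$ and trigonometric amplitudes $c_j=d_j\sin(\pi j/(1-\tfrac{2\ga}{\pi}))$, exactly as in \eqref{dpmr4c}--\eqref{dpmr4f}.

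Finally I would assemble the global error analysis. Setting $R$ equal to $S$ times the inverse of the parametrices, $R$ solves a small-norm RHP whose jump is $I+\ocal(k^{-1})$ away from the three disks and whose dominant contributions come from their boundaries; the Neumann series then gives $R=I+R_1/k+\ep_k^{R}+\ocal(k^{-2})$, where $R_1$ collects the non-oscillatory Airy and origin data and $\ep_k^{R}$ the oscillatory origin data. Tracing $Y_1$ back through $R\to S\to T\to Y$ and reading off $(Y_1)_{12}$ then yields \eqref{dpmr4a} with the stated $G$, $\kappa^{\rm DW}$, and $\ep_k$. The main obstacle throughout is unquestionably the origin parametrix: establishing its exact solvability in the presence of the accumulating poles and the logarithmic density, and extracting from it both the frequency $\om$ and the full series of exponents $\kappa_j$ and amplitudes $c_j$, is where essentially all of the work lies, whereas the Airy edges and the outer parametrix are routine.
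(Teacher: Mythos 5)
Your plan reconstructs, in outline, exactly the proof that this paper relies on: the paper itself does not prove Proposition \ref{dis_DW_hk} at all, but quotes it from \cite{BF06} and \cite[Proposition 6.1.2]{BL14}, and that cited proof is precisely the Riemann--Hilbert scheme you describe (Fokas--Its--Kitaev problem with $h_k^{\rm DW}$ read from $(Y_1)_{12}$, rescaling to the corner potential $|x|-\z x$, log-singular equilibrium density with square-root edges, lens opening, Airy edge parametrices, an exact origin parametrix retaining the meromorphic structure of $w^{\rm DW}$, small-norm analysis). So at the level of strategy you are on the same road as the source the paper cites.

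However, two of your attributions of \emph{where the stated constants come from} are wrong, and both would derail a literal execution. First, the exponents $\kappa_j=1+2j/(\frac{\pi}{2\ga}-1)$ in \eqref{dpmr4d} are produced by the \emph{zeros} of $w^{\rm DW}$ --- the poles of $1/w^{\rm DW}$, which is the function entering the opened-lens jump matrices --- located at $x=2\pi ij/(\pi-2\ga)$, i.e.\ at rescaled points $ij\pi/\bigl(k(\frac{\pi}{2\ga}-1)\bigr)$; a singularity at rescaled height $y/k$ contributes the exponent $1+2y/\pi$, which reproduces $\kappa_j$ exactly. The poles of $w^{\rm DW}$ at $x\in 2i(\Z\setminus\{0\})$, which you single out, would instead give exponents $1+4\ga j/\pi$; since $4\ga/\pi<4\ga/(\pi-2\ga)$, these would \emph{dominate} the stated series, and their absence from \eqref{dpmr4c}--\eqref{dpmr4d} shows they contribute nothing. (The paper's own list of modifications for the HT weight confirms this mechanism: the relevant points move from the zeros of the $\sinh$ in the numerator to the zeros of the $\cosh$, $z_j=i(j-\frac12)\pi/\bigl(n(\frac{\pi}{2\ga}-1)\bigr)$.) Second, the $\ga$-dependent half of $\kappa^{\rm DW}=\frac{1}{12}-\frac{2\ga^2}{3\pi(\pi-2\ga)}$ cannot come from the Airy matching data, which produce only universal, $\ga$-independent constants. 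In the cited proof it arises from the $\ocal(k^{-2})$ corrections to the equilibrium measure and Lagrange multiplier obtained by solving the variational problem for the exact $k$-dependent potential rather than its corner limit; concretely it is the value of the $\coth$-integral \cite[Eq.\ (6.3.54)]{BL14} --- the very quantity this paper recomputes with $\tanh$ in place of $\coth$ (a factor $-\frac12$) to obtain $\kappa^{\rm HT}$ in Theorem \ref{dis_HT_hk}. Your plan freezes the equilibrium measure at the limiting potential and dismisses the discrepancy as exponentially small; it is exponentially small pointwise away from the origin, but its integrated effect is exactly this $\ocal(k^{-1})$ term. In your organization it would have to be recovered from the subleading matching of the origin parametrix, which your sketch never flags; the sentence assigning $\kappa^{\rm DW}/k$ to the Airy data is the gap.
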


In light of equation \eqref{wt4}, this proposition is a key ingredient in the proof of Theorem \eqref{theorem:D}. We also need an asymptotic formula for the sequence $h_k^{\rm HT}$. Before presenting that result, we briefly comment on the two systems of orthogonal polynomials \eqref{wt2} and \eqref{wt2a}. The weights \eqref{wt1} are identical 
except that $\cosh$ and $\sinh$ are interchanged. It may not be surprising to the reader then that the asymptotic analysis is very similar for both systems of orthogonal polynomials.   Consider the rescaled weights
\begin{equation*}\label{rw1_d}
\begin{aligned}
w_n^{\rm DW}(x)=w^{\rm DW}\left(\frac{nx}{\ga}\right)&=
\frac{e^{n\z x}
\sinh \left[n\left(\frac{\pi}{2\ga}-1\right)x\right]}
{\sinh \frac{n\pi x}{2\ga}}, \\
w_n^{\rm HT}(x)=w^{\rm HT}\left(\frac{nx}{\ga}\right)&=
\frac{e^{n\z x}
\cosh \left[n\left(\frac{\pi}{2\ga}-1\right)x\right]}
{\cosh \frac{n\pi x}{2\ga}}, \quad \z=\frac{t}{\ga},
\end{aligned}
\end{equation*}
which may be written as
\begin{equation*}\label{rw1_b}
w_n^{\rm DW}(x) =e^{-nV_n^{\rm DW}(x)}, \qquad w_n^{\rm HT}(x) =e^{-nV_n^{\rm HT}(x)},
\end{equation*}
where 
\begin{equation*} \label{rw5_d}
V_{n}^{\rm DW}(x)=-\z x-\frac{1}{n}\,\log\frac{
\sinh \left[n\left(\frac{\pi}{2\ga}-1\right)x\right]}
{\sinh \frac{n\pi x}{2\ga}}, \quad 
V_{n}^{\rm HT}(x)=-\z x-\frac{1}{n}\,\log\frac{
\cosh \left[n\left(\frac{\pi}{2\ga}-1\right)x\right]}
{\cosh \frac{n\pi x}{2\ga}}\,.
\end{equation*}
This rescaling of the weight is the first step in the Riemann--Hilbert analysis of the orthogonal polynomials \eqref{wt2} done in \cite{BF06, BL14}. Notice that as $n\to \infty$, both $V_{n}^{\rm DW}(x)$ and $V_{n}^{\rm HT}(x)$ approach the same limit:
\begin{equation*} \label{rw6a_d}
 V_n^{\rm DW}(x)\to V(x), \qquad  V_n^{\rm HT}(x)\to V(x) \equiv
-\zeta x+|x|.
\end{equation*}
This indicates that the orthogonal polynomials \eqref{wt2a} will differ from the ones \eqref{wt2} only in subleading terms as $k\to \infty$. Indeed, it requires only minor adjustments of the Riemann--Hilbert analysis presented in \cite[Chapter 6]{BL14} to prove the following theorem.
\begin{theo}\label{dis_HT_hk}
As $k\to \infty$, the normalizing constants $h_k^{\rm HT}$ satisfy
\begin{equation}\label{dpmr4a_HT}
\frac{h_k^{\rm HT}}{(k!)^2}=G^{2k+1} \left(1+\frac{\kappa^{\rm HT}}{k}+\ep_k+\mathcal O(k^{-2})\right),
\end{equation}
where $G$ is as in \eqref{dpmr4b},
\begin{equation}\label{dpmr4b_HT}
\begin{aligned}
\kappa^{\rm HT}=\frac{1}{12}+\frac{\ga^2}{3\pi(\pi-2\ga)},
\end{aligned}
\end{equation}
 and $\ep_k$ is as in \eqref{dpmr4c}--\eqref{dpmr4f} except that the numbers $c_j$ are given as
 \begin{equation*}\label{dpmr4f_HT}
c_j=d_j\cos\left(\frac{\pi (j-1/2)}{1-\frac{2\ga}{\pi}}\right),\qquad j=1,2,\ldots,
\end{equation*}
where $d_j\not=0$, and the exponents $\kappa_j$ are
 \begin{equation*}\label{kappajHT}
 \kappa_j = 1+\frac{2j-1}{\frac{\pi}{2\ga}-1}>1.
 \end{equation*}
 
\end{theo}

Combining Theorem \ref{dis_HT_hk}, Proposition \ref{dis_DW_hk}, and equation \eqref{wt4} we immediately obtain that for some $\ep>0$, 
\begin{equation*}
Z_{2n}^{\rm HT} =C n^{\kappa} F^{2n^2}(1+\ocal(n^{-\ep})),
\end{equation*}
where
\begin{equation*}
F=\frac{\pi ab}{2\ga\cos \frac{\pi t}{2\ga}}, \quad \kappa =\kappa^{\rm HT}+\kappa^{\rm DW} = \frac{1}{6} - \frac{\ga^2}{3\pi(\pi-2\ga)},
\end{equation*}
and $C$ is an unknown constant. To complete the proof of Theorem \ref{theorem:D}, we must improve the error to $\ocal(n^{-1})$ and prove that the constant $C$ is of the form \eqref{constant:D}. The fact that the error is in fact $\ocal(n^{-1})$ follows from the oscillatory nature of the constants $\ep_k$ which appear in \eqref{dpmr4a} and \eqref{dpmr4a_HT}. For a proof that $\sum_{k=1}^{n-1} \ep_k = {\rm const.} + \ocal(n^{-1})$, see \cite[proof of Theorem 6.1.3]{BL14}. The proof that the constant term $C$ is of the form \eqref{constant:D} relies on the fact that the determinants $\tau_n^{\rm DW}$ and $\tau_n^{\rm HT}$ both satisfy the Toda equation \eqref{toda}.
 Then the dependence of the constant $C$ on the parameter $t$ can be obtained exactly as in \cite[Section 6.8]{BL14}. This completes the proof of Theorem \ref{theorem:D}, given the result of Theorem \ref{dis_HT_hk}. The proof of this theorem is outlined below.

\begin{proof}[Proof of Theorem \ref{dis_HT_hk}]
The complete Riemann--Hilbert analysis necessary to prove Theorem \eqref{dis_HT_hk} is quite involved and we do not repeat it here since it is nearly identical to the analysis of \cite{BF06}, \cite[Chapter 6]{BL14}. Instead  we simply list the relatively few changes which must be made to the analysis of \cite[Chapter 6]{BL14} to deal with the weight $w^{\rm HT}(x)$ rather than $w^{\rm DW}(x)$. 
The key ingredient in the Riemann--Hilbert analysis is the computation of the equilibrium measure, which is calculated in \cite[Section 6.3]{BL14}. As noted earlier, the two potentials $V_n^{\rm DW}$ and $V_n^{\rm HT}$ have the same limit as $n\to \infty$, and so the calculation of the limiting equilibrium measure for $V_n^{\rm HT}$ is exactly as shown in \cite[Section 6.3.1]{BL14}. The only difference is in the subleading corrections to the equilibrium measure, which are calculated in \cite[Section 6.3.2]{BL14}. In this calculation the primary difference is that $\coth$ must be replaced with $\tanh$ in equations (6.3.24) and (6.3.26) of \cite{BL14}. Then the calculation in equation (6.3.54) becomes\footnote{There is a typo in  \cite[Equation (6.3.54)]{BL14}. All integrals after the first line should be from $0$ to $\infty$, not $-\infty$ to $\infty$. The value of the integral, however, is given correctly.}
\begin{equation*}\label{ap1}
\begin{aligned} 
\int_{-\infty}^\infty xf(x)dx
&=\int_0^\infty x
\left[\frac{\pi}{\ga}\tanh \frac{\pi x}{2\ga}
-\frac{\pi-2\ga}{\ga}
\tanh \frac{(\pi-2\ga)x}{\ga}
-2\,\sign x\right]dx\\
&=2\int_{0}^\infty x
\left[\left(\frac{\pi}{\ga}-2\right)\frac{1}{e^{x(\pi/\ga-2)}+1}-\frac{\pi}{\ga}\,\frac{1}{e^{x\pi/\ga}+1}
\right]dx\\
&=-2\left(\frac{\ga}{\pi}-\frac{\ga}{\pi-2\ga}\right)\int_{0}^\infty\frac{u}{e^u+1}\,du \\
&=-2\left(\frac{\ga}{\pi}-\frac{\ga}{\pi-2\ga}\right)\frac{\pi^2}{12}=\frac{\pi\ga^2}{3(\pi-2\ga)}\,.
\end{aligned}
\end{equation*}
Notice that the final result of this calculation differs from the original one in \cite[Equation (6.3.54)]{BL14} by a factor of $-(1/2)$. As a result, the RHS of equations (6.3.55) and (6.3.56) is changed by a factor of $-(1/2)$, as do the $n^{-2}$-terms in equations (6.3.57), (6.3.58), and (6.3.60).
Thus in Proposition 6.3.2, the terms of the order $n^{-2}$ are changed by a factor of $-(1/2)$. 

The calculations in \cite[Sections 6.3.4, 6.3.5)]{BL14}, in which the density and resolvent of the equilibrium measure are calculated, may be repeated verbatim keeping in mind that the endpoints $\al_n$ and $\be_n$ have been adjusted in the subleading terms. The next section \cite[Section 6.3.6]{BL14} requires some modification. The terms of order $n^{-2}$ in the expansion of $\al_n$ and $\be_n$ are expressed as $\De$ in equation (6.3.90), and so the definition of $\De$ given in (6.3.91) must be changed by a factor of $(-1/2)$, as must the RHS of both equations in (6.3.119). Then the first terms in the RHS of (6.3.128) and (6.3.133) are also multiplied by $(-1/2)$, the calculations in the remainder of the section are adjusted accordingly, and we find that (6.3.138) becomes
\begin{equation*}\label{ap2}
e^{ n l_n}=\frac{(\be-\al)^{2n}}{2^{4n}e^{2n}}\left(1  + \frac{\ga^2}{3n\pi(\pi-2\ga)}+\mathcal O(n^{-2})\right).
\end{equation*}
As a result of the above formula, in equations (6.7.4)-(6.7.6), the term $-\frac{2\ga^2}{3n\pi(\pi-2\ga)}$ is replaced with $\frac{\ga^2}{3n\pi(\pi-2\ga)}$, and so the $n^{-1}$ term in the expansion of $h_n$ becomes 
\begin{equation*}\label{ap3}
\kappa=\frac{1}{12}+\frac{\ga^2}{3n\pi(\pi-2\ga)},
\end{equation*}
which is exactly the $\kappa^{\rm HT}$ given in \eqref{dpmr4b_HT}.

The changes in the Riemann--Hilbert analysis of \cite[Sections 6.4-6]{BL14} are quite minor. In equations (6.5.8), (6.5.13), and (6.5.20), $\sinh$ must be replaced by $\cosh$. Then in equation (6.6.15) the poles are at the points 
\begin{equation*}\label{ap4}
z_j = \frac{i(j-1/2)\pi}{n(\frac{\pi}{2\ga}-1)}.
\end{equation*}
In (6.6.16)\footnote{There is a typo in \cite[Equation (6.6.16)]{BL14}. $z_j$ should be replaced with $y_j$, and the definition of $y_j$ should be made earlier.} and (6.6.18), $\sin$ should be replaced with $\cos$, and $y_j$ should be defined as
\begin{equation*}\label{ap5}
y_j = \frac{(j-1/2)\pi}{\frac{\pi}{2\ga}-1},
\end{equation*}
and in (6.1.26), the numbers $c_j$ are subsequently defined as
\begin{equation*}\label{ap6}
c_j= d_j\cos\left(\frac{\pi(j-1/2)}{1-2\ga/\pi}\right),\quad j=1,2,\dots.
\end{equation*}
\end{proof}

\section{Anti-ferroelectric phase}\label{af-phase}
In the anti-ferroelectric phase $c>a+b$, the weights $a$, $b$, and $c$ are parametrized as in \eqref{pf5}. We consider the two sequences of monic orthogonal polynomials
\begin{equation*}
\sum_{\ell=-\infty}^\infty P^{\rm DW}_j(\ell)P^{\rm DW}_k(\ell) e^{2t\ell - 2\ga|\ell|} = h^{\rm DW}_k \de_{jk}, \quad \sum_{\ell=-\infty}^\infty P^{\rm HT}_j(\ell)P^{\rm HT}_k(\ell) e^{(2\ell+1)t - |2\ell+1|\ga} = h^{\rm HT}_k \de_{jk},
\end{equation*}
where $\{h^{\rm DW}_k\}_{k=0}^\infty$ and $\{h^{\rm HT}_k\}_{k=0}^\infty$ are sequences of normalizing constants. The orthogonal polynomials $P^{\rm DW}_j(\ell)$ appear in the analysis of the six-vertex model with DWBC \cite{BL10}, \cite[Chapter 7]{BL14}, and that partition function is given as
\begin{equation*}
Z_n^{\rm DW} = (2ab)^{n^2}\prod_{k=0}^{n-1} \frac{h_k^{\rm DW}}{(k!)^2}.
\end{equation*}
Using the formula \eqref{lt3b} for the function $\psi(t)$, we find that the Hankel determinant $\tau_n^{\rm HT}$ defined in \eqref{in7} can be expressed in terms of the orthogonal polynomials 
$P^{\rm HT}_j(\ell)$. Indeed we have
\begin{equation*}
\tau_n^{\rm HT}= 2^{n^2} \prod_{k=0}^{n-1} h_k^{\rm HT}.
\end{equation*}
Thus \eqref{in8} and \eqref{in9} can be written as
\begin{equation}\label{eq:Zn_factor}
Z_{2n}^{\rm HT} = (2ab)^{2n^2} \prod_{k=0}^{n-1} \frac{h_k^{\rm DW}}{(k!)^2}\prod_{k=0}^{n-1} \frac{h_k^{\rm HT}}{(k!)^2}.
\end{equation}

The asymptotic formulas for the constants $h_k^{\rm DW}$ and $h_k^{\rm HT}$ involve Jacobi theta functions, so we recall their definitions here. The four Jacobi theta functions are defined as
\begin{equation}\label{main8_af}
\begin{aligned}
\Jth_1(z)\equiv\Jth_1(z;q)&=2\sum_{n=0}^{\infty}(-1)^n q^{(n+\frac{1}{2})^2} \sin\big((2n+1)z\big)\,, \\
\Jth_2(z)\equiv\Jth_2(z;q)&=2\sum_{n=0}^{\infty}q^{(n+\frac{1}{2})^2}\cos\big((2n+1)z\big)\,, \\
\Jth_3(z)\equiv\Jth_3(z;q)&=1+2\sum_{n=1}^{\infty}q^{n^2}\cos(2nz)\,, \\
\Jth_4(z)\equiv\Jth_4(z;q)&=1+2\sum_{n=1}^{\infty}(-1)^n q^{n^2} \cos(2nz)\,, \\
\end{aligned}
\end{equation}
where $q$ is a complex number called the {\it elliptic nome} satisfying $|q|<1$.
Note that the third and fourth theta functions are related by a shift:
\begin{equation*}
\Jth_3(z; q) = \Jth_4\left(z+ \frac{\pi}{2}; q\right).
\end{equation*}
 In what follows we will fix the elliptic nome for all theta functions to be
\begin{equation*}
q = e^{-\frac{\pi^2}{2 \ga}},
\end{equation*}
and suppress the dependence on $q$ from the notation.

In \cite[Proposition 13.1]{BL10}, see also \cite[Proposition 7.3.1]{BL14}, the following asymptotic formula is obtained for the constants $h_k^{\rm DW}$.
\begin{prop}\label{af_DW_hk}
As $k\to\infty$, the normalizing constants $h_k^{\rm DW}$ satisfy
\begin{equation*}\label{af:10}
\frac{h_k^{\rm DW}}{(k!)^2} = G^{2k+1} \frac{\Jth_4((k+1)\om)}{\Jth_4(k\om)} \left(1+\ocal(k^{-2})\right),
\end{equation*}
where
\begin{equation}\label{af:11}
\om=\frac{\pi}{2}(1+\z), \quad G=\frac{\pi \Jth_1'(0)}{2\ga\Jth_1(\om)}, \quad \z = \frac{t}{\ga}.
\end{equation}
\end{prop}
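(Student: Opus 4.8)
The plan is to analyze the discrete orthogonal polynomials $P_k^{\rm DW}(\ell)$ on $\Z$ with weight $e^{2t\ell-2\ga|\ell|}$ through the Riemann--Hilbert method for discrete orthogonal polynomials (the interpolation-problem framework of Baik--Kriecherbauer--McLaughlin--Miller), reading off $h_k^{\rm DW}$ from the large-$z$ expansion of the solution. First I would set up the $2\times 2$ interpolation problem whose solution encodes $P_k^{\rm DW}$ and $h_k^{\rm DW}$, and then rescale the lattice by $\ell = ks$ so that the weight becomes $e^{-kV(s)}$ with the piecewise-linear confining potential $V(s)=2\ga|s|-2ts$; the hypothesis $|t|<\ga$ makes $V$ confining on both ends.

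The heart of the matter is the constrained equilibrium problem: because the nodes lie on $\Z$, the limiting density obeys an upper constraint $\rho\le 1$ in addition to $\rho\ge 0$. In the anti-ferroelectric regime the minimizer splits the line into a void, a band, a saturated region (where $\rho$ meets the upper constraint, centered at the kink $s=0$ of $V$), a second band, and a second void. The presence of two soft bands produces a genus-one spectral curve; computing its period ratio yields precisely the elliptic nome $q=e^{-\pi^2/2\ga}$, while the fraction of equilibrium mass carried by the bands fixes $\om=\frac{\pi}{2}(1+\z)$ with $\z=t/\ga$. I would record the associated g-function, whose boundary values control the geometric growth.

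Next come the standard steepest-descent transformations $Y\to U\to T\to S$: conjugation by the g-function, opening of lenses on the two bands, and --- the feature special to the discrete/saturated setting --- a transformation on the saturated interval that converts the saturation constraint into a regular jump. The global parametrix is built from Jacobi theta functions on the genus-one surface, with local Airy parametrices at the four soft edges. Reconstructing the $(1,2)$-entry of the large-$z$ expansion then yields $h_k^{\rm DW}/(k!)^2 = G^{2k+1}\,\Jth_4((k+1)\om)/\Jth_4(k\om)\,(1+\ocal(k^{-2}))$: the factor $G^{2k+1}$ with $G=\frac{\pi\Jth_1'(0)}{2\ga\Jth_1(\om)}$ comes from the exponential of the equilibrium potential together with the normalization of the theta parametrix (the $\Jth_1'(0)/\Jth_1(\om)$ combination is the usual prime-form constant on the torus), while the oscillatory ratio $\Jth_4((k+1)\om)/\Jth_4(k\om)$ arises because raising the degree by one shifts the argument of the global theta parametrix by the Abel image of $\infty$, namely by $\om$.

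I expect the main obstacle to be the construction of the genus-one global parametrix together with the precise bookkeeping of the $k$-dependence, which enters simultaneously through the g-function exponent and through the shifted theta argument; matching the theta parametrix to the four Airy parametrices while keeping the two channels of the surface consistent is the delicate point. A secondary difficulty is sharpening the error from $\ocal(k^{-1})$ to $\ocal(k^{-2})$, which relies on the vanishing of the first subleading correction --- a consequence of the oscillatory structure already captured by the theta ratio, analogous to the cancellation used in the disordered-phase estimate.
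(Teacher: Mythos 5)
Your proposal takes essentially the same route as the paper's proof, which here is simply a citation to \cite{BL10} (see also \cite[Chapter 7]{BL14}): nonlinear steepest descent for the discrete (interpolation) Riemann--Hilbert problem, with the constrained equilibrium measure having exactly the void--band--saturated--band--void structure you describe, a genus-one theta-function global parametrix with nome $q=e^{-\pi^2/(2\ga)}$ and phase $\om=\frac{\pi}{2}(1+\z)$, and local parametrices at the four band edges, from which $h_k^{\rm DW}$ is read off the large-$z$ expansion. Since your outline reproduces the method and all key structural ingredients of the cited proof, there is nothing to flag.
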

This proposition was proved in \cite{BL10} using the method of nonlinear steepest descent for the discrete Riemann--Hilbert problem associated with the orthogonal polynomials.
Since the weight of the orthogonal polynomials $P^{\rm HT}_j(\ell)$ differs from that of $P^{\rm DW}_j(\ell)$ only by a shift, we can expect a very similar analysis for the orthogonal polynomials $P^{\rm HT}_j(\ell)$. Indeed, making minor modifications to the analysis of \cite{BL10}, \cite[Chapter 7]{BL14}, we can prove the following theorem
\begin{theo}\label{af_HT_hk}
As $k\to\infty$, the normalizing constants $h_k^{\rm HT}$ satisfy
\begin{equation*}\label{af:10ht}
\frac{h_k^{\rm HT}}{(k!)^2} = G^{2k+1} \frac{\Jth_3((k+1)\om)}{\Jth_3(k\om)} \left(1+\ocal(k^{-2})\right),
\end{equation*}
where $G$, $\om$, and $\z$ are defined in \eqref{af:11}.
\end{theo}
Combining Proposition \ref{af_DW_hk} and Theorem \ref{af_HT_hk} with equation \eqref{eq:Zn_factor} proves Theorem \ref{theorem:AF}, up to the statement that the constant $C$ does not depend of $\ga$. This fact follows from the fact that both the determinants $\tau_n^{\rm DW}$ and $\tau_n^{\rm HT}$ satisfy the Toda equation \eqref{toda}, and is proven exactly as in \cite[Section 7.7]{BL14}.

To prove Theorem \ref{af_HT_hk} we describe the minor changes which must be made in the Riemann--Hilbert analysis of \cite[Chapter 7]{BL14} to deal with the orthogonal polynomials $P^{\rm HT}_j(\ell)$ instead of $P^{\rm DW}_j(\ell)$. The key difference is that the weight of orthogonality is shifted. After rescaling the polynomials as in \cite[Equation (7.1.9)]{BL14}, the lattice supporting the measure of orthogonality given in \cite[Equation (7.1.10)]{BL14} is 
\begin{equation*}
L_n = \left\{ x_k = \frac{\ga}{n}(2k-1) : k\in \Z\right\}.
\end{equation*}
Then the function $\Pi(z)$ defined in \cite[Equation (7.5.7)]{BL14} used to interpolate the poles in the Riemann--Hilbert problem should be changed to
\begin{equation*}
\Pi(z) = \frac{2\ga}{n\pi}\cos\left(\frac{n\pi z}{2\ga}\right),
\end{equation*}
and the interpolating matrices $D_\pm^u$ and $D_\pm^l$ defined in \cite[Equations (7.5.7) and (7.5.10)]{BL14} should be
\begin{equation*}
D_\pm^u(z) = \begin{pmatrix} 1 & -\frac{\De_n w_n(z)}{\Pi(z)} e^{\pm i\pi(\frac{nz}{2\ga}+\frac{1}{2})} \\ 0 & 1 \end{pmatrix}, \quad
D_\pm^l(z) = \begin{pmatrix} \Pi(z)^{-1} & 0 \\ -\frac{1}{\De_n w_n(z)} e^{\pm i\pi(\frac{nz}{2\ga}+\frac{1}{2})} & 1 \end{pmatrix}.
\end{equation*}
Also in the second transformation of the RHP which defines ${\bf S}_n(z)$ in terms of ${\bf T}_n(z)$, exponential factors of the form $e^{\pm in\pi z/(2\ga)}$ should be replaced with $e^{\pm i\pi(\frac{nz}{2\ga}+\frac{1}{2})}$. The result is that in the jump matrices for ${\bf S}_n(z)$ given in \cite[equation (7.5.24)]{BL14}, all exponential factors of the form $e^{\pm in\pi z/\ga}$ are replaced with $-e^{\pm in\pi z/\ga}$. This plays little role for the jump matrices which converge to the identity matrix as $n\to\infty$, but the diagonal terms of the jump matrix on the interval $[\al',\be']$ are changed by sign. Then the phase $\Om_n$ defined in \cite[Equation (7.5.26)]{BL14} should be
\begin{equation*}
\Om_n = n\pi(1+\z).
\end{equation*}
Analysis throughout the rest of the chapter is nearly identical, with $\Om_n$ shifted by $\pi$. The solution to the model RHP presented in \cite[equation (7.5.56)]{BL14} involves $\Jth_3$ with an argument involving $\Om_n/2$. When $\Om_n/2$ is shifted by $\pi/2$, it is equivalent to replacing $\Jth_3$ with $\Jth_4$. It follows then that the result of \cite[Proposition 7.5.1]{BL14} holds with $\Jth_3$ replacing $\Jth_4$, and thus \cite[equation (7.6.76)]{BL14} also holds with $\Jth_3$ replacing $\Jth_4$. The rest of the analysis in \cite[Chapter 7]{BL14} is identical, and we arrive at Theorem \ref{af_HT_hk}, which is identical to Proposition \ref{af_DW_hk} but with $\Jth_3$ replacing $\Jth_4$.

\section{Ferroelectric phase}\label{f-phase}
Now consider the ferroelectric phase $b>a+c$, where the weights $a$, $b$, and $c$ are parametrized by \eqref{pf4}.
In this case the two relevant systems of monic orthogonal polynomials are defined via the orthogonality conditions
\begin{equation}\label{f:7}
\begin{aligned}
 \sum_{\ell=0}^\infty P^{\rm DW}_j(\ell)P^{\rm DW}_k(\ell)[e^{-2\ell (t-\ga)}-e^{-2\ell(t+\ga)}]&=h^{\rm DW}_k\de_{jk}, \\
  \sum_{\ell=0}^\infty P^{\rm HT}_j(\ell)P^{\rm HT}_k(\ell)[e^{-2\ell (t-\ga)}+e^{-2\ell(t+\ga)-2\ga}]&=h^{\rm HT}_k\de_{jk}.
  \end{aligned}
\end{equation}
As shown in \cite{BL09-1}, \cite[Chapter 8]{BL14}, the DWBC partition function is given as
\begin{equation*}
Z_n^{\rm DW} = (2ab)^{n^2} \prod_{k=0}^{n-1} \frac{h^{\rm DW}_k}{(k!)^2}.
\end{equation*}
Using \eqref{in7}, \eqref{lt3a}, and \eqref{in9}, we also find that $\widetilde{Z}_n^{\rm HT}$ is given as
\begin{equation*}
\widetilde{Z}_n^{\rm HT} = (2ab)^{n^2} \prod_{k=0}^{n-1} \frac{h^{\rm HT}_k}{(k!)^2},
\end{equation*}
thus \eqref{in8} becomes
\begin{equation}\label{eq:F_factored}
Z_{2n}^{\rm HT} = (2ab)^{2n^2} \prod_{k=0}^{n-1} \frac{h^{\rm DW}_k}{(k!)^2}\prod_{k=0}^{n-1} \frac{h^{\rm HT}_k}{(k!)^2}.
\end{equation}

In the previous work \cite{BL09-1}, see also \cite[Chapter 8]{BL14}, the constants $h_k^{\rm DW}$ were compared asymptotically with the analogous constants associated with the classical Meixner polynomials, defined for example in \cite{KLS}. The classical Meixner polynomials depend on two parameters denoted $q$ and $\be$ in \cite{KLS}, and we are interested in the specialization $\be=0$. The shifted and monic version of these polynomials satisfy the orthogonality condition
\begin{equation*} \label{meix4}
\sum_{\ell=0}^\infty P_j^{\rm Q}(\ell)P_k^{\rm Q}(\ell)q^j=h_k^{\rm Q}\de_{jk},
\qquad h_k^{\rm Q}= \frac{(k!)^2q^{k+1}}{(1-q)^{2k+1}},
\end{equation*} 
see \cite[Section 8.2]{BL14} for a discussion.
Both of the weights in \eqref{f:7} are very close to the Meixner weight with $q=e^{-2(t-\ga)}$, in that they are both of the form $q^\ell\pm\tilde{q}^\ell$ where $\tilde{q}<q$. It is shown in \cite{BL09-1} that the smaller term $\tilde{q}^\ell$ has vey little effect on the constants $h_k^{\rm DW}$ as $k\to\infty$. The following proposition repeats the result of \cite[Theorem 1.1]{BL09-1} for $h_k^{\rm DW}$ and extends it to $h_k^{\rm HT}$. The proof of the result for $h_k^{\rm HT}$ is identical to the proof for $h_k^{\rm DW}$ and we refer the reader to \cite[Section 8.4]{BL14}. 
\begin{prop}
For any $\ep>0$, as $k\to\infty$ the normalizing constants $h_k^{\rm DW}$ and $h_k^{\rm HT}$ defined in \eqref{f:7} satisfy
\begin{equation*} \label{meix5}
h_k^{\rm DW} =h_k^{\rm HT}\left(1+\ocal\left(e^{-k^{1-\ep}}\right)\right) =h_k^{\rm Q}\left(1+\ocal\left(e^{-k^{1-\ep}}\right)\right),
\end{equation*}
where 
\begin{equation*} 
q=e^{-2(t-\ga)}.
\end{equation*}
\end{prop}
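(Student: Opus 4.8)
The plan is to regard each of the two weights in \eqref{f:7} as an exponentially small perturbation of the pure Meixner weight $q^\ell$, $q=e^{-2(t-\ga)}$, and to reduce the whole statement to a single decay estimate. Writing $\tilde q=e^{-2(t+\ga)}=q\,r$ with $r=e^{-4\ga}\in(0,1)$, the weights become $w^{\rm DW}(\ell)=q^\ell(1-r^\ell)$ and $w^{\rm HT}(\ell)=q^\ell(1+e^{-2\ga}r^\ell)$, so that $w^{\rm DW}\le q^\ell\le w^{\rm HT}$ on $\Z_{\ge0}$. The statement for $h_k^{\rm DW}$ is precisely \cite[Theorem 1.1]{BL09-1}, so I will concentrate on $h_k^{\rm HT}$ and then combine. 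Using the extremal characterization $h_k=\min\{\sum_\ell p(\ell)^2 w(\ell):p\text{ monic},\ \deg p=k\}$, the inequality $w^{\rm HT}\ge q^\ell$ gives for free that $h_k^{\rm HT}\ge h_k^{\rm Q}$, while inserting the monic Meixner polynomial $P_k^{\rm Q}$ as a trial function gives the matching upper bound
\[
h_k^{\rm HT}\le \sum_\ell \big(P_k^{\rm Q}(\ell)\big)^2 w^{\rm HT}(\ell)=h_k^{\rm Q}+e^{-2\ga}\Sigma_k,\qquad \Sigma_k:=\sum_\ell\big(P_k^{\rm Q}(\ell)\big)^2 q^\ell r^\ell .
\]
Thus the entire theorem reduces to showing $\Sigma_k/h_k^{\rm Q}=\ocal(e^{-k^{1-\ep}})$.

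This single estimate is the heart of the argument and the main obstacle. The factor $r^\ell$ is concentrated near $\ell=0$, whereas the mass $\big(P_k^{\rm Q}(\ell)\big)^2 q^\ell$ of the top Meixner function is concentrated on the spectrum of $P_k^{\rm Q}$, which I claim is bounded away from the origin. Indeed, from $h_k^{\rm Q}=(k!)^2 q^{k+1}/(1-q)^{2k+1}$ one reads off the monic recurrence coefficient $a_k^2=h_k^{\rm Q}/h_{k-1}^{\rm Q}=k^2 q/(1-q)^2$, and together with the Meixner diagonal coefficient $b_k=(k(1+q)+q)/(1-q)$ this shows that the zeros of $P_k^{\rm Q}$ fill, to leading order, the interval $[x_- k, x_+ k]$ with left endpoint $x_-=(1-\sqrt q)/(1+\sqrt q)>0$. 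Hence $\ell=0$ lies strictly to the left of the spectrum, in the classically forbidden region, where $\big(P_k^{\rm Q}(\ell)\big)^2 q^\ell$ decays exponentially in $k$. To turn this into a bound I would split the sum at $\ell_0=\tfrac12 x_- k$: on $[0,\ell_0]$ the turning-point (WKB) asymptotics of the Meixner polynomials give $\big(P_k^{\rm Q}(\ell)\big)^2 q^\ell\le e^{-\delta k}h_k^{\rm Q}$, while on $[\ell_0,\infty)$ one simply bounds $r^\ell\le r^{\ell_0}=e^{-2\ga x_- k}$ and uses $\sum_\ell \big(P_k^{\rm Q}(\ell)\big)^2 q^\ell=h_k^{\rm Q}$. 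Either way $\Sigma_k\le C e^{-\delta k}h_k^{\rm Q}$, which is in fact stronger than the stated $\ocal(e^{-k^{1-\ep}})$. The quantitative turning-point estimate is exactly the content of the Riemann--Hilbert/steepest-descent analysis of the Meixner comparison in \cite{BL09-1}, \cite[Chapter 8]{BL14}, which I would invoke here.

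Assembling the pieces, the two-sided bound $h_k^{\rm Q}\le h_k^{\rm HT}\le h_k^{\rm Q}\big(1+e^{-2\ga}\Sigma_k/h_k^{\rm Q}\big)$ yields $h_k^{\rm HT}=h_k^{\rm Q}\big(1+\ocal(e^{-k^{1-\ep}})\big)$; combined with the corresponding relation $h_k^{\rm DW}=h_k^{\rm Q}\big(1+\ocal(e^{-k^{1-\ep}})\big)$ from \cite{BL09-1}, dividing gives $h_k^{\rm DW}=h_k^{\rm HT}\big(1+\ocal(e^{-k^{1-\ep}})\big)$, as claimed. I note one subtlety: the clean variational argument above produces both matching bounds for $h_k^{\rm HT}$ using only the Meixner polynomial $P_k^{\rm Q}$, but the lower bound for $h_k^{\rm DW}$ (where $w^{\rm DW}\le q^\ell$ makes the trivial inequality go the wrong way) instead needs the analogous estimate for $P_k^{\rm DW}$ itself. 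This is supplied either by a short bootstrap exploiting that $P_k^{\rm DW}$ is near-extremal for $q^\ell$ and hence has its mass pushed away from the origin, or --- as in the cited references --- by the Riemann--Hilbert comparison, which controls $P_k^{\rm DW}$ and $P_k^{\rm HT}$ directly and shows them exponentially close to $P_k^{\rm Q}$ throughout.
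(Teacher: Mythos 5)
The paper's own proof of this proposition is essentially by citation: the DW statement is \cite[Theorem 1.1]{BL09-1}, and the HT statement is asserted to follow by an identical argument via \cite[Section 8.4]{BL14}, where the comparison with Meixner polynomials is done by explicit estimation of overlap sums (there is no Riemann--Hilbert analysis in the ferroelectric phase; that machinery is used in Chapters 6--7 for the other phases). Your two-sided variational scheme is a genuinely different organization, and your observation about the asymmetry of the two weights (for HT the lower bound is free and the upper bound needs only the known Meixner polynomial; for DW it is the reverse) is correct. Nevertheless the proof as written has two genuine gaps, and the first is fatal.

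First, the normalizations. Your upper-bound step uses the identity $\sum_{\ell\ge0}\bigl(P_k^{\rm Q}(\ell)\bigr)^2q^\ell=h_k^{\rm Q}$, which forces $P_k^{\rm Q}$, $h_k^{\rm Q}$ to be the Meixner data for the weight $q^\ell$ on \emph{all} of $\Z_{\ge0}$; the norm of that family is $(k!)^2q^{k}/(1-q)^{2k+1}$. The formula you quote and use elsewhere, $h_k^{\rm Q}=(k!)^2q^{k+1}/(1-q)^{2k+1}$, is the norm of the \emph{shifted} Meixner family, orthogonal with respect to $q^\ell$ on $\{1,2,3,\dots\}$; that is the family relevant to the DW weight (which vanishes at $\ell=0$) and to \cite[Theorem 1.1]{BL09-1}. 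The two differ by the exact factor $q$, which is not a $1+o(1)$ discrepancy, so your final division step equates two different objects and the chain $h_k^{\rm DW}=h_k^{\rm HT}\bigl(1+\ocal(e^{-k^{1-\ep}})\bigr)$ does not follow. Worse, your own first inequality shows the problem is not repairable at the level of the statement: since the weight in \eqref{f:7} satisfies $w^{\rm HT}(\ell)\ge q^\ell$ on all of $\Z_{\ge0}$ (at $\ell=0$ it equals $1+e^{-2\ga}$), the extremal characterization gives $h_k^{\rm HT}\ge(k!)^2q^k/(1-q)^{2k+1}=q^{-1}h_k^{\rm Q}$, which is incompatible with $h_k^{\rm HT}=h_k^{\rm Q}(1+o(1))$. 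So, taking \eqref{f:7} literally, the proposition cannot hold as stated. The discrepancy traces back to the representation \eqref{lt3a}: the expansion of $1/\sinh$ runs over odd exponents, $1/\sinh x=2\sum_{k\ge0}e^{-(2k+1)x}$, so the HT measure sits at the points $-(2k+1)$ with masses proportional to $q^{k+1/2}+\tilde q^{\,k+1/2}$, i.e.\ it genuinely charges the bottom of the lattice, whereas the proof-by-reference (and the subsequent constant computation $h_0^{\rm HT}/h_0^{\rm Q}=1+e^{-4\ga}+\ocal(e^{-2t})$) implicitly places it on the shifted even lattice. Any correct treatment must first pin down this bookkeeping --- which Meixner family, which lattice, and which prefactors enter $\widetilde Z_n^{\rm HT}$ --- and your argument, by silently using both normalizations under the one symbol $h_k^{\rm Q}$, conceals rather than resolves the issue.

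Second, even granting the reduction, your justification of the key estimate $\Sigma_k=\ocal(e^{-\de k})h_k^{\rm Q}$ rests on a false picture of the Meixner polynomials. Their zeros do not fill $[x_-k,x_+k]$, and $\ell=0$ does not lie in a void (classically forbidden) region: already $M_k(0)=1>0$ while $M_k(1)=1-k(1-q)/q<0$ for $k>q/(1-q)$, so there is a zero in $(0,1)$ for every large $k$; and your own recurrence-coefficient argument, carried out correctly, yields zeros throughout $\bigcup_{0\le s\le 1}[sx_-k,\,sx_+k]=[0,x_+k]$. The correct statement is that the limiting zero distribution is the \emph{constrained} equilibrium measure, which saturates the lattice density on $[0,x_-]$ and has its oscillatory band only on $[x_-,x_+]$: $x_-$ is a saturation edge, not the edge of the spectrum. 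In the saturated region $\bigl(P_k^{\rm Q}(x)\bigr)^2q^x$ is exponentially small only \emph{at} the lattice points (each of which is exponentially close to a zero) and exponentially large in between, so WKB/forbidden-region reasoning --- and any termwise bound ignoring the cancellation in the alternating sum $M_k(\ell)=\sum_i\binom{k}{i}\binom{\ell}{i}(1-1/q)^i$ --- is off by an exponentially large factor. The lattice-point estimate you need is true, but it must be sourced from discrete-orthogonal-polynomial saturation asymptotics or from the explicit overlap computations actually performed in \cite{BL09-1}, \cite[Section 8.4]{BL14}; it is not a Riemann--Hilbert statement that can simply be invoked from those references.
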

Applying this proposition to the formula \eqref{eq:F_factored}, we obtain
\begin{equation*}\label{f:9}
\begin{aligned}
Z^{\rm HT}_{2n}&=(2ab)^{2n^2}\prod_{k=0}^{n-1} \frac{q^{2k+2}}{(1-q)^{4k+2}}\left(1+\ocal\left(e^{-n^{1-\ep}}\right)\right) \\
&=(2ab)^{2n^2} \frac{q^{n^2+n}}{(1-q)^{2n^2}}\left(1+\ocal\left(e^{-n^{1-\ep}}\right)\right) \\
&=C \sinh(t+\ga)^{2n^2}e^{-2n(t-\ga)}\left(1+\ocal\left(e^{-n^{1-\ep}}\right)\right),
\end{aligned}
\end{equation*}
where
\begin{equation}\label{f:10}
C=\prod_{k=0}^{\infty} \frac{h_k^{\rm DW}}{h_k^{\rm Q}}\frac{h_k^{\rm HT}}{h_k^{\rm Q}}.
\end{equation}
The value of $ \prod_{k=0}^{\infty} \frac{h_k^{\rm DW}}{h_k^{\rm Q}}$ was found in \cite{BL09-1} to be
\begin{equation}\label{const_DW}
\prod_{k=0}^{\infty} \frac{h_k^{\rm DW}}{h_k^{\rm Q}} = 1-e^{-4\ga}.
\end{equation}
In the proof of this fact, see  \cite[section 8.5]{BL14}, we consider the regime $\ga>0$ is fixed and $t\to+\infty$. In this regime it is shown that the primary contribution to the product comes from the first factor, 
\begin{equation*}
\frac{h_0^{\rm DW}}{h_0^{\rm Q}}= 1-e^{-4\ga}+\ocal(e^{-2t}).
\end{equation*}
Then the fact that the Hankel determinant $\tau_n^{\rm DW}$ satisfies the Toda equation \eqref{toda} is used to show that the constant is in fact simply $1-e^{-4\ga}$. 

For the product $\prod_{k=0}^{\infty} \frac{h_k^{\rm HT}}{h_k^{\rm Q}}$ we can follow the same procedure. The only difference is that the first factor in the product is
\begin{equation*}
\frac{h_0^{\rm HT}}{h_0^{\rm Q}}= 1+e^{-4\ga}+\ocal(e^{-2t}).
\end{equation*}
All other calculations are identical to those in \cite[section 8.5]{BL14}, and the result is that 
\begin{equation}\label{const_HT}
\prod_{k=0}^{\infty} \frac{h_k^{\rm HT}}{h_k^{\rm Q}} = 1+e^{-4\ga}.
\end{equation}
Combining \eqref{f:10}, \eqref{const_DW}, and \eqref{const_HT} proves the formula \eqref{F_constants} for $C$. This completes the proof of Theorem \ref{theorem:F}.


\end{document}